\documentclass[a4paper,preprintnumbers,floatfix,superscriptaddress,aps,10pt,twocolumn,notitlepage,longbibliography,noarxiv]{revtex4-1}
\usepackage[utf8]{inputenc}
\usepackage[english]{babel}
\usepackage[T1]{fontenc}
\usepackage{amsmath}
\usepackage{amssymb}
\usepackage{amsthm}
\usepackage{mathtools}
\usepackage[colorlinks=true,citecolor=blue,urlcolor=magenta,linkcolor=red]{hyperref}
\usepackage{todonotes}
\setuptodonotes{inline}
\usepackage{scalerel}
\usepackage{braket}
\usepackage{acronym}

\usepackage[caption=false]{subfig}
\usepackage{accents}
\usepackage{bm,bbm}
\usepackage{lipsum}
\usepackage{nameref}
\usepackage[capitalize]{cleveref}
\crefname{methods}{Methods}{Methods}
\usepackage{array} 
\usepackage{mdframed}
\usepackage{comment}
\usepackage{bm}
\usepackage{listings}
\usepackage{capt-of} 
\usepackage{graphicx}
\usepackage{xcolor}
\usepackage{booktabs}
\usepackage{tabularx}
\usepackage{multirow}

\usepackage{tikz}
\usepackage[outline]{contour}

\newcommand{\RR}{\mathbb{R}}

\renewcommand{\i}{\ensuremath\mathrm{i}} 
\renewcommand{\Pr}{\mathbb{P}} 

\newcommand{\veps}{\varepsilon}
\DeclareMathOperator{\asin}{asin}

\let\Set\undefined

\DeclarePairedDelimiterX\Set[1]\{\}{%
  
  #1
}
\DeclarePairedDelimiterX{\abs}[1]{\lvert}{\rvert}{%
  \ifblank{#1}{\,\cdot\,}{#1}
} 

\DeclarePairedDelimiterX{\mean}[1]{\langle}{\rangle}{%
  \ifblank{#1}{\,\cdot\,}{#1}
} 

\newcommand{\sax}[1]{\sqrt{1-\mean{a_{#1}}^2}}

\newcommand{\xor}{{x_0\oplus x_1}}

\makeatletter
\newtheorem*{rep@theorem}{\rep@title}
\newcommand{\newreptheorem}[2]{%
\newenvironment{rep#1}[1]{%
 \def\rep@title{#2 \ref{##1}}%
 \begin{rep@theorem}}%
 {\end{rep@theorem}}}
\makeatother
\makeatletter
\newtheorem*{rep@lemma}{\rep@title}
\newcommand{\newreplemma}[2]{%
\newenvironment{rep#1}[1]{%
 \def\rep@title{#2 \ref{##1}}%
 \begin{rep@lemma}}%
 {\end{rep@lemma}}}
\makeatother
\usepackage{thmtools}
\usepackage{thm-restate}
\newtheorem{theorem}{Theorem}
\newreptheorem{theorem}{Theorem}

\newenvironment{proofsketch}{%
\proof}{\endproof}
  
\crefname{figure}{Fig.}{Figs.}
\crefname{algorithm}{Protocol}{Protocols}
\crefname{definition}{Def.}{Defs.}

\definecolor{mari}{rgb}{1.0, 0.0, 0.22}

\begin{document}
\title{Information Causality Characterizes the Set of Quantum Correlations\\ in the Simplest Bell Scenario}

\author{Mariami Gachechiladze}
\thanks{\{nikolai.miklin,mariami.gachechiladze\}@tu-darmstadt.de}
\affiliation{Department of Computer Science, Technical University of Darmstadt, Darmstadt, Germany}
\author{Nikolai Miklin}
\thanks{\{nikolai.miklin,mariami.gachechiladze\}@tu-darmstadt.de}
\affiliation{Institute for Applied Physics, Technical University of Darmstadt, Darmstadt, Germany}

\begin{abstract}
Information causality (IC) was introduced as a physical principle
constraining correlations in non-signaling theories. Whether it
can recover the exact quantum correlation boundary, beyond Uffink's inequality, has remained an open question. Here, we combine its generalized formulation for correlated inputs with a new communication protocol to derive quantum Bell inequalities that exactly characterize the quantum correlations in the simplest bipartite Bell scenario, with two binary measurements. In particular, we derive the
Tsirelson-Landau-Masanes criterion directly from IC. Thus, the generalized IC implies macroscopic locality, while we also present macroscopically local correlations that violate
generalized IC. This establishes that generalized IC is a strictly
stronger principle in this scenario. Together with our earlier
result that generalized IC implies a nontrivial communication
complexity principle, these findings strengthen the role of information causality in explaining the limits of quantum nonlocality and provide a systematic route to deriving tighter bounds on the set of quantum correlations in more general Bell scenarios.
\end{abstract}
\maketitle

Quantum theory permits violations of Bell inequalities~\cite{bell1964einstein, brunner2014bell}, but excludes correlations compatible with no-signaling alone. Explaining this gap through physical principles, without invoking the Hilbert-space formalism, is a central goal of quantum foundations. Information causality (IC) approaches this problem by limiting the information a receiver can access about a sender's data to the amount communicated, even in the presence of shared nonlocal resources~\cite{pawlowski2009information}. Its original application recovered Tsirelson's bound~\cite{cirelson1980quantum} on the Clauser–Horne–Shimony–Holt (CHSH) inequality~\cite{clauser1969proposed} and Uffink's stronger quadratic constraint~\cite{uffink2002quadratic,gachechiladze2022quantum}, establishing a direct connection between restrictions on information transfer and limits on nonlocality.

Other principles constrain correlations through computational or macroscopic requirements. Non-trivial communication complexity (NTCC) rules out theories in which every distributed Boolean function can be computed with constant communication and a success probability bounded away from one half~\cite{vandam2013implausible,brassard2006limit, kushilevitz1996communication}. Macroscopic locality (ML) requires that coarse-grained measurements on many independent copies admit a local hidden-variable description in the macroscopic limit~\cite{navascues2010glance}. Neither requirement singles out quantum correlations: both are satisfied by the almost-quantum set, which contains probability distributions with no quantum realization~\cite{navascues2015almost, navascues2007bounding}. Nevertheless, for two binary measurements per party, the quantum, macroscopically local and almost-quantum sets have identical projections onto the four bipartite correlators~\cite{navascues2010glance,navascues2015almost}. This region is characterized exactly by the Tsirelson-Landau-Masanes (TLM) criterion~\cite{cirelson1980quantum,landau1988empirical,masanes2003necessary}.

Recovering this boundary from IC has remained an open challenge. The limited reach of conventional IC tests, together with evidence that post-quantum correlations can evade them, has fueled doubts about the principle's ability to distinguish quantum from more general theories~\cite{pollyceno2023information,navascues2015almost}. These limitations, however, need not reflect the strength of the underlying principle. IC can exclude some macroscopically local correlations~\cite{cavalcanti2010macroscopically}, and we recently established that an extended formulation of IC implies NTCC~\cite{miklin2026communication}. Progress has also come from new tools for extracting its consequences. Noisy communication channels provide an alternative to concatenated protocols~\cite{miklin2021information}, while expansions around vanishing channel capacity yield explicit polynomial constraints on nonlocal correlations~\cite{jain2024informationcausality}. Crucially, allowing correlations among the sender's inputs leads to a formulation in terms of conditional mutual information, derived from the same information-theoretic axioms, that strengthens these constraints beyond Uffink's inequality~\cite{jain2024informationcausality}.

Here, we show that information causality exactly recovers the quantum correlator region for two binary measurements per party. Combining its correlated-input formulation with a new communication protocol, we exploit input correlations, the full information available to Bob, and suitably chosen noisy channels to derive the complete TLM criterion. This closes the gap between earlier IC bounds and the exact quantum boundary in this scenario, providing an operational, information-theoretic characterization of all quantum-realizable bipartite correlators. We further show that the agreement with macroscopic locality in the vanishing-capacity limit does not imply equivalence of the two principles: at finite channel capacity, we identify macroscopically local correlations that violate generalized IC. Thus, information causality is strictly stronger than macroscopic locality in the simplest Bell scenario. Beyond these results, our protocol provides a concrete new route for using information causality to derive tight constraints in more general bipartite Bell scenarios.

\subsection*{Information causality principle}
We first recall the standard formulation of the information causality (IC) principle~\cite{pawlowski2009information}. Alice receives two mutually independent bits $x_0,x_1$, while Bob receives an independent index $y\in\{0,1\}$ and produces a guess $g$ for $x_y$. If Alice communicates through a classical channel of capacity $\mathcal{C}$, Information causality requires
\begin{equation}\label{eq:ic_statement}
   I(x_0;g \vert y=0) +I(x_1;g \vert y=1) \leq \mathcal{C}, 
\end{equation}
where $I(\cdot;\cdot )$ is the Shannon mutual information between $x_i$ and Bob's guess $g$ of Alice's bits $x_0$, $x_1$.

We build on the generalization of IC to correlated inputs introduced in our earlier work~\cite{jain2024informationcausality}, with a further refinement that will be central to our derivation. Rather than quantifying Bob's information through a particular guess $g$, we retain his full classical transcript $(m',b)$, where $m'$ is the message received through the noisy channel and $b$ is his output from the shared nonlocal resource. This distinction matters because converting the transcript into a guess can discard information. For an arbitrary joint distribution of Alice's input bits (which will have another key role in deriving TLM inequality), we therefore consider the bound
\begin{equation}\label{eq:ic_correlated}
  \hspace{-0.15cm} I(x_0;m',b\vert y=0)
    +I(x_1;m',b\vert y=1,x_0)
    \leq I(m;m'),
\end{equation}
where $m$ is Alice's message before transmission and $y\in\{0,1\}$ is Bob's independently chosen input.

The conditioning on $x_0$ in the second term quantifies information about $x_1$ beyond that already supplied by $x_0$, accounting for correlations between Alice's inputs. Retaining the transcript serves a different purpose: since any guess is obtained by processing $(m',b)$ for a given $y$, the data-processing inequality guarantees that replacing the guess with the transcript can only increase the left-hand side. The right-hand side is the mutual information transmitted through the noisy channel for the actual message distribution, and satisfies $I(m;m')\leq\mathcal{C}$.
We refer to~\cref{fig:scenario} for a schematic description of the scenario.  

\begin{figure}
    \centering
    \includegraphics[width=\linewidth]{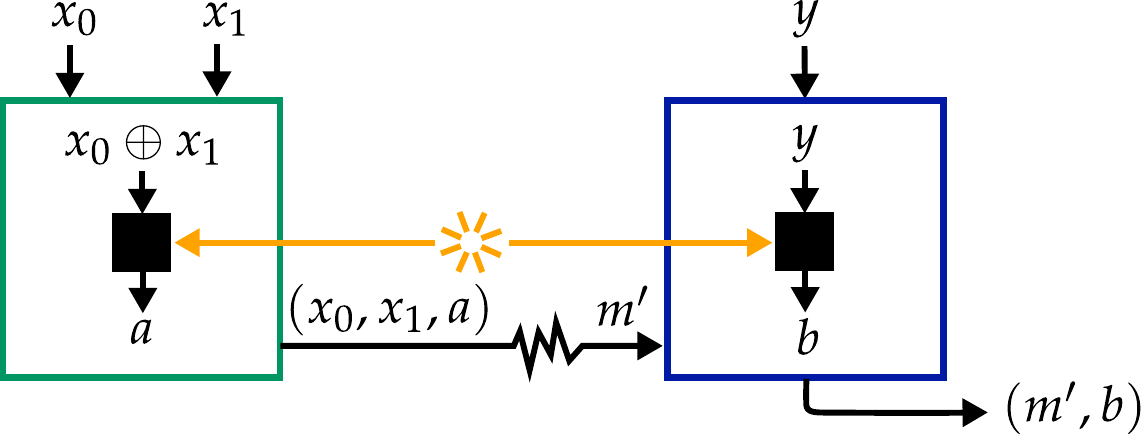}
    \caption{The scenario of information causality. Alice receives two random bits $x_0$, $x_1$, which may be correlated. Bob receives a random bit $y$ independent of $x_0$ and $x_1$. In this work, the parties communicate through a noisy channel that takes the tuple $(x_0,x_1,a)$ as input and produces a bit outcome $m'$. The parties assist their communication with a shared non-signaling resource with binary inputs and outputs.}
    \label{fig:scenario}
\end{figure}

\subsection*{Tsirelson-Landau-Masanes criterion}
In the simplest Bell scenario, i.e., the Bell scenario where Alice and Bob have binary choices of their inputs, $x$, and $y$, and binary outcomes, $a$ and $b$, the correlations that they can establish in quantum theory are known to satisfy the TLM criterion,
\begin{equation}\label{eq:TLM}
    \abs{\asin D_{0,0}+\asin D_{0,1}+\asin D_{1,0}-\asin D_{1,1}}\leq \pi,
\end{equation}
where re-normalized two-body correlators are given by
\begin{equation}\label{eq:Dxy}
    D_{x,y} \coloneqq \frac{\mean{a_xb_y}-\mean{a_x}\mean{b_y}}{\sqrt{(1-\mean{a_x}^2)(1-\mean{b_y}^2)}},
\end{equation}
with $\mean{a_xb_y}\coloneqq\sum_{a,b\in\Set{0,1}}(-1)^{a\oplus b}\Pr(a,b\vert x,y)$, $\mean{a_x}\coloneqq \sum_{a\in\Set{0,1}}(-1)^a\Pr(a\vert x)$, and $\mean{b_y}\coloneqq \sum_{b\in\Set{0,1}}(-1)^b\Pr(b\vert y)$.
The inequality in \cref{eq:TLM} has been independently derived by Tsirelson~\cite{cirelson1980quantum}, Landau~\cite{landau1988empirical}, and Masanes~\cite{masanes2003necessary}. Most notably, whenever $\mean{a_x}=\mean{b_y}=0$, TLM inequality is known to characterize the set of quantum correlations, i.e., provide a necessary and sufficient condition for the tuple $(\mean{a_0b_0},\mean{a_0b_1},\mean{a_1b_0},\mean{a_1b_1})$ to have a quantum realization.
Navascu\'es, Pironio, and Ac\'in have also shown~\cite{navascues2007bounding} that \cref{eq:TLM} is equivalent to the condition that the correlators, including the marginals $\mean{a_x}$, $\mean{b_y}$, satisfy the principle of ML.

\section*{Results}

\begin{theorem}\label{th}
    Generalized information causality characterizes the set of quantum correlations in the simplest bipartite Bell scenario with two binary inputs and outputs. Specifically, it implies the Tsirelson-Landau-Masanes criterion.
\end{theorem}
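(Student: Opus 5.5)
We will show that a correlation violating \cref{eq:TLM} allows a protocol that violates \cref{eq:ic_correlated}. The protocol follows the van Dam/IC template adapted to the figure. Alice and Bob share one copy of the box. Alice feeds $x=x_0\oplus x_1$, or more generally a suitably chosen function of her inputs, into the box and obtains $a$. She then sends $(x_0,x_1,a)$ through a noisy binary channel whose output $m'$ is ideally $x_0\oplus a$, flipped with small probability $(1-\epsilon)/2$. Bob inputs $y$ into the box, obtains $b$, and keeps the transcript $(m',b)$.

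The free parameters are the joint distribution of $(x_0,x_1)$ and the channel noise. The input distribution is biased with correlation $\langle(-1)^{x_0\oplus x_1}\rangle=\cos\theta$, and the channel may depend on all of $(x_0,x_1,a)$. These will be tuned so that the local marginals $\mean{a_x}$, $\mean{b_y}$ are absorbed and only the renormalized correlators $D_{x,y}$ of \cref{eq:Dxy} enter. To achieve this, we let the channel flip probabilities depend on $a$ so that $m'$ is unbiased, which removes the $\mean{a_x}$ dependence. On Bob's side, conditioning on $b$ inside the mutual information rescales each term by $1-\mean{b_y}^2$.

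The key step is the expansion around vanishing capacity, $\epsilon\to 0$, as in~\cite{jain2024informationcausality}. Both sides of \cref{eq:ic_correlated} are $O(\epsilon^2)$. The right-hand side is $I(m;m')\approx \epsilon^2/(2\ln 2)$. Each left-hand term is a $\chi^2$-type quantity, namely a squared covariance between $(-1)^{x_j}$ and $(-1)^{m'\oplus b}$ given $y$, divided by variances. After normalization, the leading-order inequality is a quadratic form in the $D_{x,y}$ and $\cos\theta$. Because the second term conditions on $x_0$, it contributes the partial correlation of $x_1$ with the transcript given $x_0$. This is exactly where the $1/\sin^2\theta$ factors, and hence the angle structure, appear. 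Minimizing or maximizing the resulting family over $\theta$ and over relabelings (including the extra freedom in choosing Alice's box input as a function of $(x_0,x_1)$) should give constraints of the form
\begin{equation*}
(D_{0,0}+D_{1,0})^2\cdots\le 4\sin^2\theta \ \text{type conditions},
\end{equation*}
that is, the positivity of the $3\times3$ or $4\times4$ correlation (Gram) matrix built from unit vectors of $a_0,a_1,b_0,b_1$. The final step invokes the known fact that positive semidefiniteness of the matrix
\begin{equation*}
\begin{pmatrix}1&\cos\phi&D_{0,0}&D_{1,0}\\ \cos\phi&1&D_{0,1}&D_{1,1}\\ \cdot&\cdot&1&\cdot\\ \cdot&\cdot&\cdot&1\end{pmatrix}
\end{equation*}
for some $\phi$ is equivalent to \cref{eq:TLM}. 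This is the Landau/Masanes argument, with the parameter $\theta$ of the input correlation playing the role of the free angle between Alice's or Bob's vectors. Writing $D_{x,y}=\cos\alpha_{xy}$, the optimized IC condition should reduce to $\alpha_{00}+\alpha_{01}+\alpha_{10}-\alpha_{11}$ lying within the allowed range, hence \cref{eq:TLM}.

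I expect the main obstacle to be designing the channel and input correlations so that the second-order expansion produces exactly the $2\times2$ block condition $\det\ge0$ with a free angle, rather than a weaker Uffink-type inequality. I would first try to match the IC quadratic form term by term against the Schur complement of the Gram matrix. I would then check that, when the marginals are nonzero, retaining $b$ in the transcript (rather than a guess) supplies precisely the normalization $\sqrt{1-\mean{b_y}^2}$ in \cref{eq:Dxy}.
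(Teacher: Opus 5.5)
Your architecture matches the paper's: Alice inputs $x=x_0\oplus x_1$, her inputs are correlated, the channel acts on $(x_0,x_1,a)$, Bob keeps $(m',b)$, and one expands to second order at vanishing capacity. However, the step that actually produces TLM is left open (you flag it yourself as the ``main obstacle''), and the mechanism you sketch would not deliver it. The missing idea is to give the channel an \emph{independent strength for each parity class}. The paper uses $f_{x_0,x_1,a}=(-1)^{a\oplus x_0}\kappa_{x_0\oplus x_1}/\sqrt{1-\mean{a_{x_0\oplus x_1}}^2}$ with $\kappa_0,\kappa_1$ arbitrary. The limiting IC inequality is then a quadratic form in $(\kappa_0,\kappa_1)$ that must be nonpositive for \emph{every} choice, and that is what turns it into a $2\times 2$ semidefiniteness (determinant) condition. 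Your channel has one noise parameter $\epsilon$, with its $a$-dependence fixed by the unbiasing rule. So the ratio $\kappa_1/\kappa_0$ is fixed and the form is tested along a single direction. With equal strengths, for example, optimizing over the input correlation gives only $(D_{0,1}-D_{1,1})^2\le 4\sin^2\frac{\alpha_{00}+\alpha_{10}}{2}$, where $D_{x,y}=\cos\alpha_{xy}$. For $\alpha_{01}=0$ and small $\phi=\alpha_{00}+\alpha_{10}$ this admits $\alpha_{11}$ up to about $\sqrt{2\phi}$, whereas TLM requires $\alpha_{11}\le\phi$. Relabelings only permute the $D_{x,y}$ or flip the sign of $\kappa_1/\kappa_0$, while the worst direction of the form is generically not $(1,\pm1)$.

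The Gram-matrix finish does not close this gap either. IC gives an inequality for \emph{every} input distribution, whereas the Gram-matrix form of TLM asserts the \emph{existence} of an angle making a matrix positive semidefinite. Identifying your $\theta$ with that angle mixes the two quantifiers, and you give no duality or minimax argument linking them. In the paper the input correlation is a weight, not an angle. With $r=(1+\veps)/(1-\veps)$ the determinant condition reads $(D_{0,0}D_{1,0}-D_{0,1}D_{1,1})^2\le\bigl(r(1-D_{0,0}^2)+1-D_{0,1}^2\bigr)\bigl(r^{-1}(1-D_{1,0}^2)+1-D_{1,1}^2\bigr)$. Minimizing the right-hand side over $r>0$ by AM--GM (this is the paper's choice of $\veps$) gives Landau's form $\abs{D_{0,0}D_{1,0}-D_{0,1}D_{1,1}}\le\sqrt{(1-D_{0,0}^2)(1-D_{1,0}^2)}+\sqrt{(1-D_{0,1}^2)(1-D_{1,1}^2)}$ directly.

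Two smaller points. First, describing the left-hand terms through the correlation of $(-1)^{x_j}$ with $(-1)^{m'\oplus b}$ is what a guess would give. It loses the factor $1/(1-\mean{b_y}^2)$, which only arises from $\sum_b(\cdot)^2/\Pr(b\vert y)$ over the full transcript. Second, if ``unbiased'' means unbiased given $(x_0,x_1)$, your $a$-dependent channel is a valid alternative to the paper's bookkeeping of $\mean{a_x}$. It centers the signal as $(-1)^a-\mean{a_x}$, so the marginal terms that cancel in the paper (between $y=0$ and $y=1$, and between the last left-hand term and the right-hand side) never appear. The normalization $1/\sqrt{1-\mean{a_x}^2}$ must still be put into the per-parity scale, i.e.\ into exactly the freedom described above.
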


Here, we give a proof sketch by presenting the protocol and outlining the main steps. Technical details are provided in~\cref{app:theorem1}. 

\begin{proofsketch}
We start by introducing the protocol. Alice's and Bob's measurement choices are $x=\xor$ and $y$, where $\oplus$ is a modulo $2$ addition. Alice sends a message $m=(x_0, x_1, a)$ through a noisy channel which maps an eight-dimensional input to a binary output $m'$. The conditional mutual information is then measured between Alice's input bits $x_b$ and the final output tuple on Bob's side $(m', b)$. Alice's input bit correlations are specified by
\begin{equation}
    \Pr(x_0, x_1)=\frac{1+(-1)^\xor\veps}{4}, 
\end{equation}
where $\veps\in[-1, 1]$. This keeps both bits individually uniform while allowing their parity probabilities to vary. The exact choice of $\veps$ and the classical channel parameters have an important role in the proof and are described below. 

Our derivation rests on three changes to the standard protocol. First, Alice supplies her complete classical record $m=(x_0,x_1,a)$ as the input to the noisy channel, while Bob's information is evaluated on the joint transcript $(m',b)$, rather than on a particular guess. Bob receives only the binary channel output $m'$; retaining it together with his outcome $b$ avoids the information loss associated with a decoding rule. Crucially, this preserves the dependence on Bob's marginal distribution: in the quadratic expansion of the information bound, the contributions associated with each outcome are weighted by $1/\Pr(b\vert y)$. Bob's local statistics therefore enter the resulting constraint explicitly.

Second, we exploit the freedom to choose the communication channel. Since IC must hold for every classical channel, it must also hold for channels whose transition probabilities are chosen as functions of Alice's local marginals. Specifically, we take 
\begin{equation}
    \Pr(m'\vert m) =
    \frac12+\frac{(-1)^{m'}e_c}{2}f_m, \quad \abs{e_c f_m}\leq 1.
\end{equation}
In this work, we choose
\begin{equation}
   f_{x_0,x_1,a} = (-1)^{a\oplus x_0}
    \frac{\kappa_{\xor}}{\sax{\xor}},
\end{equation} 
where $\kappa_0,\kappa_1$ are arbitrary real parameters. 
For nonzero local marginals and any fixed choice of these parameters, the channel is well defined for sufficiently small $\abs{e_c}$: its validity requires only $\abs{e_c\kappa_x}\leq\sax{x}$ for each $x$. Thus, the dependence on Alice's marginals can be incorporated without restricting the channel parameters that enter the limiting inequality. 

The role of the channel normalization becomes apparent as its capacity approaches zero. We consider channels with arbitrarily small capacity, taking $e_c\to 0$ while keeping
$\kappa_0,\kappa_1$ fixed. Although the mutual-information terms
vanish in this limit, their leading-order coefficients retain
nontrivial information about the shared correlations. Since IC
must hold for every such channel, it also constrains these
coefficients. Dividing the left-hand side of \cref{eq:ic_correlated} by the right-hand side and finding the limit by applying l'H\^{o}pital's rule twice yields a quadratic constraint,
following the approach of Ref.~\cite{jain2024informationcausality}. The exact inequality in terms of channel parameters, shared correlations, and $\veps$ is given in Appendix~\cref{eq:afterlimchi}. 

The channel is chosen to produce the normalization appearing in
$D_{x,y}$. As $e_c\to 0$, the leading contribution to each
mutual-information term is quadratic in the channel parameter $f_m$.
The factor $1/\sax{x}$ built into $f_m$ contributes to the limiting bound. On the other hand, Bob's normalization
arises differently: retaining his outcome $b$ gives separate
contributions weighted by $1/\Pr(b \vert y)$. Summing these
contributions produces the correlators
$\mean{a_xb_y}-\mean{a_x}\mean{b_y}$, normalized by $\sqrt{1-\mean{b_y}^2}$ Bob's local
marginal. Together, the two ingredients yield $D_{x,y}$.

The final ingredient is the choice of correlations between Alice's
input bits. Since IC must hold for every source distribution, we
can choose $\veps$ as a function of the shared correlators.
After exploiting the freedom to vary the channel parameters
$\kappa_0,\kappa_1$, the optimal choice is
\begin{equation}
    \frac{1+\veps}{1-\veps}=\sqrt{\frac{(1-D_{0,1}^2)(1-D_{1,0}^2)}{(1-D_{0,0}^2)(1-D_{1,1}^2)}}.
\end{equation}
With this choice, the derived constraint combines exactly into the TLM form (as given in Ref.~\cite{landau1988empirical}).
\end{proofsketch}

Now that we have established that the generalized information causality can exactly reproduce the boundary of correlations in the macroscopic local set, we show that the correlations compatible with IC are a strict subset of those with ML. 

\begin{theorem}
    The generalized information causality principle is strictly stronger than Macroscopic Locality in the simplest Bell scenario. 
\end{theorem}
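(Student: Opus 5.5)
The statement has two parts, inclusion and strictness, and I will treat them separately.

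For inclusion, \cref{th} shows that every correlation satisfying generalized IC obeys the TLM inequality \cref{eq:TLM} in the renormalized correlators $D_{x,y}$. It is derived in the vanishing-capacity limit $e_c\to 0$ of the protocol above. By the result of Navascu\'es, Pironio and Ac\'in, \cref{eq:TLM} together with the marginals is equivalent to macroscopic locality in this scenario. Hence the set of IC-compatible correlations is contained in the ML set.

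It remains to exhibit a macroscopically local correlation that violates \cref{eq:ic_correlated} for some finite-capacity channel and some input distribution. The natural family to search is the boundary of the ML set: points where \cref{eq:TLM} is saturated. For these, the leading quadratic term of the ratio LHS/RHS in \cref{eq:ic_correlated} equals exactly $1$ as $e_c\to 0$. Whether IC is violated is then decided by the next-order terms in $e_c$, or by evaluating the mutual informations exactly at finite $e_c$.

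Concretely, I would proceed in four steps.
\begin{enumerate}
\item Restrict to unbiased marginals, $\mean{a_x}=\mean{b_y}=0$. Then ML coincides with the quantum set of correlators, so any ML point exists as a quantum correlator tuple. However, quantum points must satisfy IC only if IC holds in quantum theory, which it does. A violation is therefore impossible there.
\item Conclude that the separating example must use \emph{nonzero marginals}. This is exactly where ML (equivalently the NPA level $1$ relaxation) is known to be strictly larger than the quantum set.
\item Choose a point satisfying \cref{eq:TLM} with equality or near equality and with biased marginals, e.g.\ $\mean{a_x}=\mean{b_y}=\mu\neq 0$, that is not quantum. Standard examples lie outside $Q_1$ but inside $Q_{1}$-ML, i.e.\ in the almost-quantum or level-$1$ gap.
\item For such a point, fix the protocol of \cref{th} with the channel $f_m$, parameters $\kappa_x$, and $\veps$ as prescribed. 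Compute $I(x_0;m',b|y=0)+I(x_1;m',b|y=1,x_0)-I(m;m')$ as an explicit function of $e_c$, and expand to quartic order, or simply evaluate numerically at moderate $e_c$.
\end{enumerate}

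For a saturating point the quadratic coefficient cancels, so the sign of the quartic coefficient decides the outcome. I would scan $\mu$ and the boundary parameter to find a region where this coefficient is positive, certifying a violation for small but finite $e_c$. If the perturbative sign is negative everywhere on the boundary, I would instead move to finite $e_c$. There I would optimize over general binary channels $\Pr(m'|m)$ and over $\veps$, and report a concrete rational example in which the exact mutual informations give LHS $>$ RHS.

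The main obstacle is this last step. Finding a genuinely ML point where finite-capacity IC bites requires higher-order information beyond the quadratic term that reproduces TLM. Certifying it needs exact evaluation of the entropies, not just an asymptotic argument.
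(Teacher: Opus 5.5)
Your inclusion argument (\cref{th} combined with the Navascu\'es--Pironio--Ac\'in equivalence between TLM and macroscopic locality) is exactly what the paper relies on. Your reduction in steps 1--2, that a separating example must have biased marginals, is also sound.

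The genuine gap is the strictness part, which is the real content of the theorem. You never exhibit a macroscopically local box and a protocol for which \cref{eq:ic_correlated} fails. What you give is a search plan: scan the TLM boundary, expand the vanishing-capacity protocol of \cref{th} to quartic order in $e_c$, and, failing that, optimize over finite-capacity channels. Nothing in it shows that the quartic coefficient is ever positive, or that the fallback optimization finds a violation, so the existence claim remains unproved. Step 3 is also muddled. ML coincides with the first NPA level, so ``outside $Q_1$ but inside $Q_1$-ML'' is contradictory. Moreover, the almost-quantum set is precisely where the paper lists a violation of generalized IC as an open problem, so it is not a reliable place to search.

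The paper closes this with an explicit, exactly verifiable certificate, and it uses a different protocol from the one in \cref{th}.
\begin{itemize}
\item It takes the box in \cref{eq:MLbox}, with $\langle a_0\rangle=\langle a_1\rangle=\tfrac35$, $\langle b_0\rangle=\tfrac12$, $\langle b_1\rangle=0$. Its renormalized correlators $(\tfrac{\sqrt3}{2},\tfrac{\sqrt3}{2},\tfrac12,-\tfrac12)$ saturate TLM, so it lies in ML.
\item Instead of the vanishing-capacity limit, it uses van Dam's protocol at full capacity: independent uniform inputs ($\veps=0$), $x=\xor$, and a perfect binary channel $m'=m=a\oplus x_0$. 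The right-hand side then equals $1$ bit.
\item The decisive point is that Bob keeps the transcript $(m,b)$. Since $\Pr(1,0\vert x,0)=0$, the outcome $b=0$ (probability $\tfrac34$) reveals $x_0$ with certainty, giving $I(x_0;m,b\vert y=0)=\tfrac75-\tfrac14\log5$.
\item The second term is $I(x_1;m,b\vert y=1,x_0)=I(x;a,b\vert y=1)=\tfrac12\log5+\tfrac{3}{10}\log3-\tfrac75$.
\item The total is $\tfrac14\log5+\tfrac{3}{10}\log3\approx1.056>1$ (with $\log=\log_2$).
\end{itemize}
With the standard guess $g=m\oplus b$, the same box yields only $2-h(\tfrac1{20})-h(\tfrac{3}{10})\approx0.83$, where $h$ is the binary entropy. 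So the violation comes from exploiting the zeros of the box through the full transcript at finite capacity. It does not come from higher-order corrections to the protocol of \cref{th}, which your route would still have to establish.
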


While it is known that macroscopic local correlations can violate IC in larger Bell tests~\cite{cavalcanti2010macroscopically}, inclusions of these two sets of correlations have not been established in any scenario. On the contrary, before this work, it was believed that Information causality is a weaker physical principle~\cite{pollyceno2023information, jain2026nonlocal, jain2026bounds}.

\begin{proof}
To prove this statement, it is sufficient to present bipartite correlations that satisfy the ML principle but violate the IC.  

We express such probabilities in the table, 
\begin{equation}
\label{eq:MLbox}
\renewcommand{\arraystretch}{1.3}
\setlength{\arraycolsep}{8pt}
\Pr(a,b|x,y)=
\begin{array}{c|cccc}
    xy\backslash ab
    & 00 & 01 & 10 & 11 \\
    \hline
    00 & \tfrac34 & \tfrac1{20} & 0 & \tfrac15 \\
    01 & \tfrac12 & \tfrac3{10} & 0 & \tfrac15 \\
    10 & \tfrac34 & \tfrac1{20} & 0 & \tfrac15 \\
    11 & \tfrac3{10} & \tfrac12 & \tfrac15 & 0
\end{array}\ .
\end{equation}

One can directly obtain single and two body correlations $\mean{a_0}=\mean{a_1}=\frac{3}{5}$, $\mean{b_0}=\frac{1}{2}$ and $\mean{b_1}=0$, while $\mean{a_0b_0}=\mean{a_1b_0}=\frac{9}{10}$ and $\mean{a_0b_1}=-\mean{a_1b_1}=\frac{2}{5}$. Inserting them in~\cref{eq:Dxy} gives us four re-normalized two-body correlators, 
\begin{equation}
    \left(\frac{\sqrt{3}}{2},\frac{\sqrt{3}}{2},\frac{1}{2},-\frac{1}{2}\right), 
\end{equation}
which clearly give the highest value of $\pi$ in~\cref{eq:TLM} and, thus, satisfy the TLM criterion.

Next, we show that these macroscopic local correlations violate the generalized IC principle. Note that for this we choose a new protocol and a nonzero channel capacity, which is, in general, known to give stronger constraints on quantum correlations in some cases~\cite{miklin2021information}. 

For this example, consider the uniformly distributed random inputs of Alice, $x_0, x_1\in\{0,1\}$ (i.e., $\veps=0$), and the van Dam protocol~\cite{van2013implausible}, in which Alice's choice of measurement is $x=\xor$, and the communication channel is binary and perfect $m'=m=a\oplus x_0$. This results in the right-hand side of the IC statement in \cref{eq:ic_correlated} being equal to $1$, if we choose the logarithm to be base $2$.

Following the steps of the exact derivations in~\cref{app:theorem1}, we explicitly derive two terms of the mutual information on the left-hand side of the IC inequality. Note that even though we consider the standard setting of IC with uncorrelated input bits, the expression in \cref{eq:ic_correlated} does not reduce to \cref{eq:ic_statement}, because we do not introduce the guess variable $g$.

First, we evaluate the conditional distribution
\begin{align}\label{eq:Prob_mb_ML}
  \Pr(m,b\vert x_0, x_1, y)& =\sum_a \Pr(m\vert x_0,x_1,a)\Pr(a,b\vert \xor, y)\nonumber \\
    & =\Pr(m \oplus x_0,b\vert \xor, y),
\end{align}
where we used the fact that $m=x_0\oplus a$.
These distributions correspond to the ones in \cref{eq:MLbox}, up to some permutation which does not affect the calculation of the mutual information.

Inserting \cref{eq:Prob_mb_ML} into \cref{eq:ic_correlated},  direct calculation results in the left-hand side of it being
\begin{equation}
    \frac{1}{4}\log 5+\frac{3}{10}\log 3 \approx 1.056>1,
\end{equation}
i.e., a violation of the IC principle.
\end{proof}

\section*{Discussion}
We show that information causality is a strictly stronger physical principle than macroscopic locality in the simplest Bell scenario. We achieve this by  three seemingly modest modifications of the original formulation of the IC principle:
correlated inputs on Alice's side, retention of the received
message jointly with Bob's outcome, and channel parameters
chosen to incorporate Alice's local marginals. For the resulting
protocol, the leading constraint in the limit of vanishing
channel capacity exactly reproduces macroscopic
locality principle in the simplest Bell scenario. For a nonzero capacity, we show that the IC principle leads to tighter results than ML. 
Together with our earlier result that extended IC implies non-trivial communication complexity~\cite{miklin2026communication}, these findings strengthen the position of IC among physical principles constraining bipartite correlations. Whether almost-quantum correlations satisfy generalized IC remains unresolved.

Our results reopen several research directions. The full power
of IC is still unknown, even in the simplest Bell scenario.
Can a different protocol yield constraints beyond ML already
at leading order in the vanishing-capacity limit? Does the
recovery of ML through optimized communication channel extend to
more general bipartite Bell scenarios? Can almost-quantum
correlations violate generalized IC, either in the simplest Bell
scenario or in a larger one? Finally, combining protocols
involving wirings of several boxes with vanishing-capacity
methods may yield stronger, higher-order quantum Bell
inequalities. These questions offer new routes to exploring
how far information causality can account for the limits of
quantum nonlocality.

\acknowledgements
    We thank Costantino Budroni and Christoph Smaczny for inspiring discussions and the Quantum Information 2026 workshop held at the Centre Paul-Langevin in the village of Aussois, France, where the results were derived. 
    This research was funded within the QuantERA II Programme that has received funding from the EU's H2020 research and innovation programme under the GA No 101017733.
    This research work was supported by the National Research Center for Applied Cybersecurity ATHENE.

\bibliography{ref}

\onecolumngrid
\section*{Appendix}
\crefalias{section}{appendix}
\crefalias{subsection}{appendix}
\begin{appendix}
In this Appendix, we provide technical details that support the statements from the main text.

\section{Proof of~\cref{th}}\label{app:theorem1}
We start by writing each mutual information term from the generalized IC statement in \cref{eq:ic_correlated} separately.
\begin{align}
I(x_0;m',b\vert y=0)&=H(m',b\vert y=0)-\sum_{j\in\{0,1\}}\Pr(x_0=j)H(m',b\vert x_0=j, y=0).~\label{eq:ICLHS0}\\
I(x_1;m',b\vert y=1, x_0)&=\sum_{j\in\{0,1\}}\Pr(x_0=j)H(m',b\vert x_0=j, y=1)\nonumber\\&
    -\sum_{j,k\in\{0,1\}}\Pr(x_0=j, x_1=k)H(m',b\vert x_0=j, x_1=k, y=1).~\label{eq:ICLHS1}\\
I(m;m')&=H(m')-\sum_{l\in\{0,1\}^3} \Pr(m=l)H(m'\vert m=l).~\label{eq:ICRHS}
\end{align}
The probabilities of the shared non-signaling resource can be expressed in terms of the correlators as
\begin{equation}\label{eq:pabxy}
    \Pr(a,b\vert x,y) = \frac{1}{4}\left(1+(-1)^a\mean{a_x}+(-1)^b\mean{b_y}+(-1)^{a\oplus b}\mean{a_xb_y}\right),
\end{equation}
for $a,b,x,y\in\Set{0,1}$.
The marginal probabilities are then equal to
\begin{equation}\label{eq:paxpby}
    \Pr(a\vert x) = \frac{1}{2}(1+(-1)^a\mean{a_x}),\quad \Pr(b\vert y) = \frac{1}{2}(1+(-1)^b\mean{b_y}).
\end{equation}
In this work, we consider the communication channel taking the entire data on Alice's side, namely $m=(x_0,x_1,a)$, as its input and producing the output $m'\in\Set{0,1}$.
The channel of this form can be parametrized in the following way
\begin{equation}\label{eq:channel_probs}
    \Pr(m'\vert m) = \frac{1}{2}+(-1)^{m'}\frac{e_c}{2}f_m,
\end{equation}
for $m'\in\Set{0,1}$, $m\in\Set{0,1}^3$, and $e_c,f_m\in \RR$.
As long as $\abs{e_cf_m}\leq 1$, the probabilities in \cref{eq:channel_probs} are non-negative and sum up to $1$ for each $m$, i.e., the corresponding channel exists.
The parameter $e_c$ in \cref{eq:channel_probs} plays the role of determining the strength of the noise in the channel.

Next, we write down each probability necessary for calculating the entropy functions in \cref{eq:ICLHS0,eq:ICLHS1,eq:ICRHS}.
\begin{align}\label{eq:probs_gen}
\Pr(m',b\vert x_0,x_1,y) = \sum_a \Pr(m'\vert x_0,x_1,a)\Pr(a,b\vert \xor,y) = \frac{\Pr(b|y)}{2}
+\frac{(-1)^{m'}e_c}{2}
\underbrace{
\sum_a f_{x_0,x_1,a}\Pr(a,b|\xor,y)
}_{\eqqcolon\,\chi_{x_0,x_1,y}^{b}},
\end{align}
where $\chi_{x_0,x_1,y}^{b}\in \RR$, and we substituted the conditioning on $x_0,x_1$ with $\xor$ in the probability above, because we fix Alice's measurement choice to be $\xor$. 
From \cref{eq:probs_gen}, we can calculate the rest of the probabilities
\begin{align}\begin{split}
\Pr(m',b|y=0)
&= \frac{\Pr(b|y=0)}{2}
+\frac{(-1)^{m'}e_c}{2}
\sum_{x_0,x_1} \left(\sum_{a} f_{x_0,x_1,a}\Pr(a,b|\xor,y=0)\right)\Pr(x_0,x_1)\\
&= \frac{\Pr(b|y=0)}{2}
+\frac{(-1)^{m'}e_c}{2}
\sum_{x_0,x_1} \Pr(x_0,x_1) \chi_{x_0,x_1,0}^b,
\end{split}\end{align}
and
\begin{align}\begin{split}
\Pr(m',b|x_0,y)
&= \frac{\Pr(b|y)}{2}
+\frac{(-1)^{m'}e_c}{2}
\sum_{x_1} \left(\sum_{a} f_{x_0,x_1,a}\Pr(a,b|\xor,y=0)\right)\Pr(x_1\vert x_0)\\
&= \frac{\Pr(b|y)}{2}
+\frac{(-1)^{m'}e_c}{2}
\sum_{x_1} \Pr(x_1|x_0) \chi_{x_0,x_1,0}^b.
\end{split}\end{align}
We now use the technique, which we introduced in Ref.~\cite{jain2024informationcausality}, to obtain quadratic constraints from the IC statement.
In short, we consider the limit of the IC inequality for the channel capacity going to $0$, which corresponds to $e_c\to 0$. This limit can be obtained using l'H{\^o}pital's rule and effectively replaces each summand in the IC statement with the limit of its second derivative for $e_c\to 0$.
We use the following general formula
\begin{equation}
    \lim_{e_c\to 0}\frac{\partial^2}{\partial e_c^2}(\xi+e_c\eta)\log(\xi+e_c\eta) = \frac{\eta^2}{\xi},
\end{equation}
for $\xi,\eta\in \RR$ to avoid identical calculations for each of the entropy terms. 
This results in the following implication of the IC statement in \cref{eq:ic_correlated},
\begin{align}\label{eq:afterlimchi}
\begin{split}
-&\sum_{b} \frac{\left(\sum_{x_0,x_1} \chi_{x_0,x_1, 0}^b \Pr(x_0,x_1)\right)^2}{\Pr(b|y=0)} + \frac{1}{2}\sum_{x_0,b,y} (-1)^y\frac{\left(\sum_{x_1} \chi_{x_0,x_1,y}^b \Pr(x_1|x_0)\right)^2}{\Pr(b|y)}+\sum_{x_0,x_1} \Pr(x_0,x_1)\sum_b \frac{(\chi_{x_0,x_1,1}^b)^2}{\Pr(b\vert y=1)}\\
\leq&-\left(\sum_{x_0,x_a,a} f_{x_0,x_1,a}\Pr(a\vert \xor) \Pr(x_0,x_1) \right)^2+\sum_{x_0,x_a,a} \Pr(a\vert \xor) \Pr(x_0,x_1)f_{x_0,x_1,a}^2.
\end{split}
\end{align}
We now fix the form of the parameters $f_{x_0,x_1,a}$ to be
\begin{equation}\label{eq:fx0x1a}
 f_{x_0,x_1,a} = (-1)^{a\oplus x_0}\frac{\kappa_{\xor}}{\sax{\xor}},
\end{equation}
for $\kappa_0,\kappa_1\in\RR$. Inserting this expression and the parametrization of probabilities in~\cref{eq:pabxy} in the definition of $\chi_{x_0,x_1,y}^b$, we obtain,
\begin{equation}
    \chi_{x_0,x_1,y}^b = \frac{(-1)^{x_0}}{2}\frac{\kappa_{\xor}}{\sax{\xor}}\left(\mean{a_{\xor}}+(-1)^b\mean{a_{\xor}b_y}\right).
\end{equation}
We incorporate this parametrization in the individual summand in~\cref{eq:afterlimchi}.
For the first summand, we show that the numerator is zero. Indeed,
\begin{align}
    \begin{split}
        \sum_{x_0,x_1} \chi_{x_0,x_1, 0}^b \Pr(x_0,x_1)= \sum_{x_0,x_1} \frac{(-1)^{x_0}}{2}\frac{\kappa_{\xor}}{\sax{\xor}}\left(\mean{a_{\xor}}+(-1)^b\mean{a_{\xor}b_0}\right) \frac{1+(-1)^{\xor}\veps}{4}=0.
    \end{split}
\end{align}
Apart from the factor $(-1)^{x_0}$, the rest of the expression depends on $x_0$ and $x_1$ only through their parity $\xor$. For each fixed value of this parity, the two corresponding pairs $(x_0,x_1)$ therefore contribute with equal magnitude and opposite signs, and hence cancel exactly.

Moving to the second summand, since the two terms only differ in Bob's setting $y$, we keep the input unspecified,
\begin{align}\label{eq:second_sum_pre}
    \begin{split}
        \sum_{x_0,b} \frac{\left(\sum_{x_1} \chi_{x_0,x_1, 0}^b \Pr(x_1|x_0)\right)^2}{\Pr(b|y)}=&\sum_{x_0,b} \frac{\left(\sum_{x_1} \frac{\kappa_{\xor}}{\sax\xor}\left(\mean{a_{\xor}}+(-1)^b\mean{a_{\xor}b_y}\right) \frac{1+(-1)^{\xor}\veps}{2}\right)^2}{2 (1+(-1)^b\mean{b_y}) }\\
        =&\sum_{b} \frac{\left(\sum_{x} \frac{\kappa_{x}}{\sax{x}}\left(\mean{a_{x}}+(-1)^b\mean{a_{x}b_y}\right) \frac{1+(-1)^{x}\veps}{2}\right)^2}{1+(-1)^b\mean{b_y} }.
    \end{split}
\end{align}
We can rewrite the terms in the numerator in the re-normalized correlator notation in \cref{eq:Dxy},
\begin{equation}
\mean{a_x}+(-1)^b\mean{a_xb_y} = \left(1+(-1)^b\mean{b_y}\right)\mean{a_x} + (-1)^b\sax{x}\sqrt{1-\mean{b_y}^2} D_{x,y}.
\end{equation}
This leads to a simplified form of the expression in \cref{eq:second_sum_pre},
\begin{align}
&\sum_b\frac{1}{1+(-1)^b\mean{b_y}}\left[\left(1+(-1)^b\mean{b_y}\right)\sum_x\frac{\kappa_x\mean{a_x}}{\sax{x}}\frac{1+(-1)^x\veps}{2} + (-1)^b\sqrt{1-\mean{b_y}^2}\sum_x\kappa_x D_{x,y}\frac{1+(-1)^x\veps}{2}\right]^2
\nonumber\\
&=2\left[\sum_x\frac{\kappa_x\mean{a_x}}{\sax{x}}\frac{1+(-1)^x\veps}{2}\right]^2+2\left[\sum_x\kappa_x D_{x,y}\frac{1+(-1)^x\veps}{2}\right]^2
\label{eq:LHS23_D}.
\end{align}
Now, when we calculate the full expression for the second summand in \cref{eq:afterlimchi}, e.g., take the sum over $y$, we notice that the first term in \cref{eq:LHS23_D} cancels out, because it does not depend on $y$.
Therefore, we intermediately conclude that 
\begin{equation}
    \frac{1}{2}\sum_{x_0,b,y} (-1)^y\frac{\left(\sum_{x_1} \chi_{x_0,x_1,y}^b \Pr(x_1|x_0)\right)^2}{\Pr(b|y)} = \sum_{y}(-1)^y\left[\sum_x\kappa_x D_{x,y}\frac{1+(-1)^x\veps}{2}\right]^2.
\end{equation}
Finally, the last term on the left-hand side of \cref{eq:afterlimchi} becomes
\begin{align}
    \sum_{x_0,x_1} \Pr(x_0,x_1)\sum_b \frac{(\chi_{x_0,x_1,1}^b)^2}{\Pr(b\vert y=1)} & = \sum_{x\in\Set{0,1}}\frac{1+(-1)^x\veps}{4}\frac{\kappa_x^2}{1-\mean{a_x}^2}\sum_{b}\frac{(\mean{a_x}+(-1)^b\mean{a_xb_1})^2}{1+(-1)^b\mean{b_1}}\\
    &=\sum_{x\in\Set{0,1}}\frac{1+(-1)^x\veps}{2}\kappa^2_x \left(D_{x,1}^2+\frac{\mean{a_x}^2}{1-\mean{a_x}^2}\right).\label{eq:LHS4_D}
\end{align}

Continuing with the terms on the right-hand side of \cref{eq:afterlimchi}, we again notice that the first term vanishes
\begin{equation}
    \sum_{x_0,x_1,a}(-1)^{a\oplus x_0}\frac{\kappa_{\xor}}{\sax{\xor}}\frac{1+(-1)^a\mean{a_{\xor}}}{2}\frac{1+(-1)^{\xor}}{4}=0,
\end{equation}
because of the summation over $x_0$.
The second summand on the right-hand side of \cref{eq:afterlimchi} simplifies to 
\begin{equation}
    \sum_{x_0,x_1,a}\frac{1+(-1)^a\mean{a_\xor}}{2}\frac{1+(-1)^\xor\veps}{4}\frac{\kappa_\xor^2}{1-\mean{a_\xor}^2}=\sum_{x\in\Set{0,1}}\frac{1+(-1)^x\veps}{2}\frac{\kappa_x^2}{1-\mean{a_x}^2}.\label{eq:RHS_D}
\end{equation}

Having expressed all the parts of the expression in \cref{eq:afterlimchi} in terms of $\kappa_{0},\kappa_1$, $\veps$ and the observed correlators, we bring the final expressions in \cref{eq:LHS23_D,eq:LHS4_D,eq:RHS_D} and group the terms with respect to $\kappa_0$ and $\kappa_1$
\begin{align}\label{eq:k0k1_D_condition}
\begin{split}
    \kappa_0^2\left(D_{0,0}^2\left(\frac{1+\veps}{2}\right)^2+D_{0,1}^2\frac{1-\veps^2}{4}-\frac{1+\veps}{2}\right) &+\kappa_1^2\left(D_{1,0}^2\left(\frac{1-\veps}{2}\right)^2+D_{1,1}^2\frac{1-\veps^2}{4}-\frac{1-\veps}{2}\right) \\
    &+2\kappa_0\kappa_1\frac{1-\veps^2}{4}\left(D_{0,0}D_{1,0}-D_{0,1}D_{1,1}\right)\leq 0.
    \end{split}
\end{align}
Now, we notice that the condition in \cref{eq:k0k1_D_condition} for arbitrary $k_0,k_1\in \RR$ is equivalent to a requirement that the following matrix is positive semidefinite
\begin{align}\everymath{\displaystyle}\label{eq:PSD}
    \begin{pmatrix}
        \frac{1+\veps}{1-\veps}(1-D_{0,0}^2)+1-D_{0,1}^2 & D_{0,1}D_{1,1}-D_{0,0}D_{1,0} \\
        D_{0,1}D_{1,1}-D_{0,0}D_{1,0} & \frac{1-\veps}{1+\veps}(1-D_{1,0}^2)+1-D_{1,1}^2
    \end{pmatrix}\geq 0,
\end{align}
where we have further simplified the terms in front of $\kappa_0^2$ and $\kappa_1^2$.
The diagonal terms of~\cref{eq:PSD} are both non-negative.
Therefore, \cref{eq:PSD} is equivalent to the determinant positivity condition,
\begin{equation}
    \left(D_{0,0}D_{1,0}-D_{0,1}D_{1,1}\right)^2\leq \left(\frac{1+\veps}{1-\veps}(1-D_{0,0}^2)+1-D_{0,1}^2\right)\left(\frac{1-\veps}{1+\veps}(1-D_{1,0}^2)+1-D_{1,1}^2\right).
\end{equation}
The final step is to choose the parameter $\veps$, such that 
\begin{equation}
    \frac{1+\veps}{1-\veps} = \sqrt{\frac{(1-D_{0,1}^2)(1-D_{1,0}^2)}{(1-D_{0,0}^2)(1-D_{1,1}^2)}},
\end{equation}
which leads to the TLM inequality (in the form given in Ref.~\cite{landau1988empirical})
\begin{equation}
    \abs{D_{0,0}D_{1,0}-D_{0,1}D_{1,1}}\leq \sqrt{1-D_{0,0}^2}\sqrt{1-D_{0,1}^2}+\sqrt{1-D_{1,0}^2}\sqrt{1-D_{1,1}^2}.
\end{equation}

\end{appendix}

\end{document}